\providecommand{\U}[1]{\protect\rule{.1in}{.1in}}
\newtheorem{theorem}{Theorem}
\newtheorem{conclusion}[theorem]{Conclusion}
\newtheorem{example}[theorem]{Example}
\newtheorem{problem}[theorem]{Problem}
\newtheorem{proposition}[theorem]{Proposition}
\newtheorem{remark}[theorem]{Remark}
\newenvironment{proof}[1][Proof]{\noindent\textbf{#1.} }{\ \rule{0.5em}{0.5em}}
\begin{document}

\title{Linear stochastic stability analysis of nonlinear systems. Parametric
destabilization of the wave propagation.   }
\author{Gy\"{o}rgy Steinbrecher\\EURATOM-MEdC, University of Craiova,  Str. A. I. Cuza 13,\\Craiova --200585, Romania
\and Xavier Garbet,\\EURATOM-CEA, IRFM,  F-13108 Saint Paul Lez Durance, France.}
\maketitle

\begin{abstract}
\ Straightforward method for the derivation of linearized version of
stochastic stability analysis of the nonlinear differential equations is
presented. Methods for the study of large time behavior of the moments are
exposed. These general methods are applied to the study of the stochastic
destabilization of the Langmuir waves in plasma.  

\end{abstract}

\section{\bigskip INTRODUCTION}

The influence of the external noise on the stability of the linearized
dynamical systems was the subject of numerous studies on finite dimensional
models, studied both in physical or mathematical literature \cite{Hoff}%
-\cite{SXG}.  In the following we extend part of the formalism to dynamical
systems described by partial differential or integro-differential stochastic equations.

The main motivation of this study is related to the sensitivity of the onset
to unstable regime, under the effect of random noise, observed in hydrodynamic
turbulence \cite{Hoff}, the effect of stochastic destabilization of the stable
regime of simplest Hamiltonian systems, like linear undamped $LC$ system with
random capacitance \cite{Lindberg}, \cite{Gitterman}. \ The main dynamical
system of interest for the controlled thermonuclear fusion studies with
magnetic confinement are described the kinetic equations. Before the study of
these more complete models, we expose the main physical and mathematical
principles on a more simple model: of Langmuir waves in turbulent background.\ 

The large part of the mathematical aspects exposed here, at least those
applied to finite dimensional systems, can be found in the mathematical
literature. An example  of application to large discrete systems, perturbed by
space and temporal white noise, can be found in the works \cite{KN1}%
-\cite{KN4}. 

The derivation of the equations for the stochastic linearized system from the
stochastic nonlinear one as well as the application to Langmuir waves, is new. 

\subsection{Physical model }

The main class of problems studied in this article are related to the
influence of the external noise on the linear stability of steady state of
general, non linear physical systems. We suppose that the system is described
by a set of general nonlinear partial differential, or integro-differential
equations, that contains a stochastic perturbation, modelled by a random
field. \ 

In all of this part of the study the random field will be modelled by temporal
white noise with general spatial correlations.

Instead of establishing a general abstract framework, in this article we
illustrate the method by an example. The first interesting physical model is
non the linear hyperbolic evolution equation, that describe the propagation of
nonlinear \ waves in plasma, under the effects of the external noise
$n_{\omega}(\mathbf{x},t)$.
\begin{equation}
\frac{\partial^{2}\Phi_{\omega}(\mathbf{x},t)}{\partial t^{2}}=\Delta
\Phi_{\omega}-F(\Phi_{\omega}(\mathbf{x},t),n_{\omega}(\mathbf{x}%
,t))\label{i1}%
\end{equation}

Here $F(u,v)$ is a nonlinear function, having near $u=v=0$ the linearized form
$F(u,v)\underset{u,v\rightarrow0}{=}\alpha u+\beta v+\mathcal{O}(u^{2}+v^{2}%
)$, with $u>0$. \ \ We are interested in the linear stability of the solution
in the deterministic case $\Phi(\mathbf{x},t)=0$, under the effect of the
noise $n(\mathbf{x},t)$ , represented in fact by a random field whose
statistical properties are supposed to be known. In particular we suppose that
the random field is homogenous and stationary. \

We illustrate now the general recipe to reduce the linear stochastic stability
analysis of non linear systems to standard form. Similar to the deterministic
case we expand $F(\Phi,n)$ in the variable $\Phi$ up to order $2$ term to
obtain%
\begin{equation}
F(\Phi,n_{\omega})=\alpha\Phi+M_{\omega}(\mathbf{x},t)\Phi+a_{\omega
}(\mathbf{x},t)+\mathcal{O}(\Phi^{2})\label{i2}%
\end{equation}

where $\alpha>0$ is constant, and the multiplicative and additive noise term
are given exactly by
\begin{align}
M_{\omega}(\mathbf{x},t)  & =\ \left[  \frac{\partial\left[  F(\Phi,n_{\omega
}(\mathbf{x},t))-F(\Phi,0)\right]  }{\partial\Phi}\right]  _{\Phi=0}%
\label{i3}\\
a_{\omega}(\mathbf{x},t)  & =F(0,n_{\omega}(\mathbf{x},t))\label{i4}%
\end{align}

\bigskip Neglecting the $\mathcal{O}(\Phi^{2})$ term and introducing the
variable \ $P_{\omega}(\mathbf{x},t)=\frac{\partial\Phi_{\omega}%
(\mathbf{x},t)}{\partial t}$  , we obtain the linear system of stochastic
differential equation of the first order form%
\begin{align}
d\Phi_{\omega}(\mathbf{x},t)  & =P_{\omega}dt\label{i5}\\
dP_{\omega}(\mathbf{x},t)  & =(\Delta+\alpha)\Phi dt+dW_{\omega}%
(\mathbf{x},t)\Phi\ +a_{\omega}(\mathbf{x},t)dt\label{i6}\\
dW_{\omega}(\mathbf{x},t)  & =M_{\omega}(\mathbf{x},t)dt\label{i7}%
\end{align}

Without loss of generality we can consider the case
\[
\mathbb{E}_{\omega}\left[  M_{\omega}(\mathbf{x},t)\right]  =0
\]
In the following we will approximate $M_{\omega}(\mathbf{x},t)$ by a temporal
white noise, consequently we have%
\begin{align*}
\mathbb{E}_{\omega}\left[  W_{\omega}(\mathbf{x},t)\right]    & =0\\
\mathbb{E}_{\omega}\left[  W_{\omega}(\mathbf{x},t)W_{\omega}(\mathbf{x}%
^{\prime},t^{\prime})\right]    & =\min(t,t^{\prime})C(\mathbf{x,x}^{\prime})
\end{align*}
Here $C(\mathbf{x,x}^{\prime})$ is the spatial correlation function of the
random field, related to the random multiplicative terms. By using $L_{p}$
estimates like in the work \cite{SGXGBW}, it can be proven a result similar to
\cite{SGBW}, that under general conditions the additive term does not provide
exponential destabilization of the moments of the random fields $\Phi_{\omega
},P_{\omega}$ . Consequently we consider in the continuation only the
homogenous system, while we are interested only in the large time behavior of
the linearized system Eqs.(\ref{i5}-\ref{i7}).

The previous system of stochastic differential equation can be formally
rewritten as general linear stochastic differential equation in a suitable
Banach space as follows%
\begin{equation}
d\mathbf{Y}_{\omega}(t)=\left[  \widehat{A}dt+d\widehat{B}_{\omega}(t)\right]
\mathbf{Y}_{\omega}(t)\label{i9}%
\end{equation}

where
\begin{align*}
\mathbf{Y}_{\omega}  & =%
\begin{pmatrix}
\Phi_{\omega}\\
P_{\omega}%
\end{pmatrix}
\\
\widehat{A}  & =%
\begin{pmatrix}
0 & 1\\
\Delta+\alpha & 0
\end{pmatrix}
\\
d\widehat{B}_{\omega}(t)  & =%
\begin{pmatrix}
0 & 0\\
W_{\omega}(\mathbf{x},t) & 0
\end{pmatrix}
\end{align*}

Similar linearization procedure can be performed for the system of drift
kinetic equations, describing various species of charged particles
$(q_{\alpha},m_{\alpha})$ , $1\leq\alpha\leq A$in a strong constant magnetic
field $\mathbf{B}$. The starting nonlinear system of equations for the
functions $f_{a}(\mathbf{x,v},t)$ can be written as follows%
\begin{align*}
\frac{\partial f_{a}(\mathbf{x,v},t)}{\partial t}  & =-v_{\parallel}%
\frac{\partial f_{a}}{\partial z}-\frac{q_{\alpha}}{m_{\alpha}}E_{\parallel
}\frac{\partial f_{a}}{\partial v_{\parallel}}-\frac{\mathbf{E\times B}}%
{B^{2}}\cdot\nabla_{\perp}f_{a}\\
\varepsilon_{0}\Delta_{\mathbf{x}}f_{a}  & =%
{\displaystyle\sum_{\alpha=1}^{A}}
q_{\alpha}%
{\displaystyle\int}
f_{a}(\mathbf{x,v},t)d^{3}\mathbf{v}%
\end{align*}

The linearization of the equations starts with the ansatz
\[
f_{a}(\mathbf{x,v},t)=\delta f_{\omega,a}+\left[  n_{0}(\mathbf{x})+\delta
n_{\omega}(\mathbf{x,t})\right]  \exp\left[  -v^{2}\left(  \beta
_{0}(\mathbf{x})+\delta\beta_{\omega}(\mathbf{x},t)\right)  \right]
\]

around the equilibrium density and inverse temperature profiles $n_{0}%
(\mathbf{x}),\beta_{0}(\mathbf{x})$ . The stochastic effects are induced by
the temperature and density fluctuations $\delta\beta_{\omega}(\mathbf{x}%
,t),\delta n_{\omega}(\mathbf{x,t})$ . By performing the same linearization
procedure we obtain a system of linear stochastic differential equations of
the form Eq.(\ref{i9}) where
\[
\mathbf{Y}_{\omega}=%
\begin{pmatrix}
\delta f_{\omega,1}(\mathbf{x,v},t)\\
\delta f_{\omega,2}(\mathbf{x,v},t)\\
\cdots\\
\delta f_{\omega,A}(\mathbf{x,v},t)
\end{pmatrix}
\]

\subsection{Mathematical background}

We will study now the  correct formulation of the stability problem of the
solutions of Eq.(\ref{i9}), and describe some general methods to solve these
problems. 

Consider a general linear evolution equation in a complex Banach space
$\mathbf{V}$ with the norm $\left\Vert .\right\Vert $ .
\begin{equation}
\frac{d\mathbf{x}(t)}{dt}=\widehat{A}\mathbf{x}(t);~\mathbf{x}(t)\in
\mathbf{V};\label{10}%
\end{equation}

where $\widehat{A}$ is a linear operator with dense domain in $\mathbf{V}$
\ and range in $\mathbf{V}.$We will denote by $\mathbf{V}^{\ast}$ the dual
space of $\mathbf{V}$, where the transpose $\widehat{A}^{T}$ \ of $\widehat
{A}$ acts.

$~$An equivalent form, written in components ( or reducing our study to the
discretized form of the partial differential of integro-differential
equations) is%
\begin{equation}
\frac{dx^{i}(t)}{dt}=A_{j}^{i}x^{j}(t)\label{20}%
\end{equation}
$\ $\ where summation on repeated indices is understood.

Typical examples of interest will be the linearized equations used in the
deterministic linear stability studies.

In the framework of the deterministic linear stability studies, in the finite
dimensional case, a clear answer to the stability problem is given by the
eigenvalue of $\widehat{A}$ with largest real part. In general case we\ denote
by $sp(\widehat{A})$ the spectrum of $\widehat{A}$ and
\begin{equation}
\lambda=\max\left\{  \operatorname{Re}(z)|z\in sp(\widehat{A})\right\}
\label{30}%
\end{equation}

The real number $\lambda$ is the largest Liapunov exponent. Then for all
$\varepsilon>0$ we have for any solution of Eqs.(\ref{10}, \ref{20})%

\begin{equation}
\underset{t\rightarrow\infty}{\lim}\frac{\left\Vert \mathbf{x}(t)\right\Vert
}{\exp\left[  \left(  \lambda+\varepsilon\right)  t\right]  }=0 \label{40}%
\end{equation}

If $\lambda$ is known, then the question about the stability of null solution
in Eqs.(\ref{10}, \ref{20}) is solved.

Consider now the stochastic generalization of Eqs.(\ref{10}, \ref{20}), by
adding a random, time dependent a linear term, for the sake of simplicity
modelled by an operator -valued Brownian motion . In this end, we consider
$\left(  \Omega,\mathcal{A},\mathcal{F}_{t},P\right)  $ a filtered probability
space and denote by $\widehat{B}_{\omega}(t)$ , with $\omega\in\Omega$, an
operator-valued stochastic process adapted to the filtration $\mathcal{F}_{t}%
$. In the our case $\widehat{B}_{\omega}(t)$ is an operator valued Brownian
motion, or equivalently $\widehat{B}_{\omega}(t)\mathbf{x}$ is an $\mathbf{V}$
valued Brownian motion \cite{Oksendal} for all $\mathbf{x\in V}$.

\begin{remark}
\label{RemIto} We will use everywhere the It\^{o} formalism.
\end{remark}

We have the following stochastic evolution equation,
\begin{equation}
d\mathbf{x}_{\omega}(t)=\left[  \widehat{A}dt+d\widehat{B}_{\omega}(t)\right]
\mathbf{x}_{\omega}(t)\label{50}%
\end{equation}

or in components ( we use the general relativity index conventions)%
\begin{align}
dx_{\omega}^{i}(t)  &  =\left[  A_{j}^{i}dt+\rho_{j,a}^{i}dw_{\omega}%
^{a}(t)\right]  x_{\omega}^{j}(t)\label{60}\\
1  &  \leq i,j\leq n;\ 1\leq a\leq A \label{6.01}%
\end{align}

\bigskip where $\rho_{j,a}^{i}$ are constants and $w_{\omega}^{a}(t)$ are
independent standard Brownian motions%
\begin{equation}
\mathbb{E}_{\omega}\left[  w_{\omega}^{a}(t)w_{\omega}^{b}(t^{\prime})\right]
=\min(t,t^{\prime})\delta^{a,b} \label{6.02}%
\end{equation}

\bigskip We have the matrix-valued Brownian motions$\widehat{B}_{\omega}(t),$
with components $B_{j,\omega}^{i}(t)$, having the correlation tensor
$C_{j,n}^{i,m}$%
\begin{align}
\mathbb{E}_{\omega}\left[  B_{j,\omega}^{i}(t)B_{n,\omega}^{m}(t^{\prime
})\right]   &  =\min(t,t^{\prime})C_{j,n}^{i,m}\label{61}\\
C_{j,n}^{i,m}  &  =\sum_{a=1}^{A}\rho_{j,a}^{i}\rho_{n,a}^{m} \label{61.1}%
\end{align}

To this correlation tensor we associate a linear operator $\widehat
{\mathbf{C}}$ acting in the tensor product space $\mathbf{V\otimes V}$, so an
equivalent form of Eqs. (\ref{61}, \ref{61.1}) is%
\begin{equation}
\mathbb{E}_{\omega}\left\{  \left[  \widehat{B}_{\omega}(t)\mathbf{x}\right]
\otimes\left[  \widehat{B}_{\omega}(t^{\prime})\mathbf{y}\right]  \right\}
=\min(t,t^{\prime})\widehat{\mathbf{C}}\left(  \mathbf{x}\otimes
\mathbf{y}\right)  ;\mathbf{x}\otimes\mathbf{y\in V\otimes V} \label{61.2}%
\end{equation}

An equivalent form of the Eq.(\ref{61.2}) \ that will be useful is%
\begin{equation}
\mathbb{E}_{\omega}\left\{  \left[  d\widehat{B}_{\omega}(t)\mathbf{x}\right]
\otimes\left[  d\widehat{B}_{\omega}(t)\mathbf{y}\right]  \right\}
=\widehat{\mathbf{C}}\left(  \mathbf{x}\otimes\mathbf{y}\right)  dt
\label{61.3}%
\end{equation}

\begin{remark}
\label{RemPeq1}Observe immediately that exactly due to the It\^{o} definition,
if we denote $\mathbf{Y}_{1}(t)=\mathbb{E}\mathbf{x}_{\omega}(t),$ then we
obtain from Eq.(\ref{50})
\begin{equation}
\frac{d\mathbf{Y}_{1}(t)}{dt}=\widehat{A}\mathbf{Y}_{1}(t) \label{61.4}%
\end{equation}
so apparently the effect of the noise is averaged out. For higher order
moments the equations will contain also the correlation effects of the noise.
\end{remark}

Nevertheless, we can try generalize Eq.(\ref{40}) as follows:

\begin{problem}
\label{ProblemLambda_p} Find the minimal value of $\lambda_{p}$ such that for
$p\geq1$ and all $\varepsilon>0$ we have \bigskip$\ $%
\begin{equation}
\underset{t\rightarrow\infty}{\lim}\frac{\left[  \mathbb{E}_{\omega}\left[
\left\Vert \mathbf{x}_{\omega}(t)\right\Vert ^{p}\right]  \right]  ^{1/p}%
}{\exp\left[  \left(  \lambda_{p}+\varepsilon\right)  t\right]  }=0 \label{70}%
\end{equation}

\end{problem}

Simplest soluble one-dimensional cases, as well as more general results
\cite{Khasminsky}, suggest that in this case the exponent $\lambda$ is
dependent on the exponent $p$ in a non-trivial way: in the case of systems
that in deterministic case are linearly stable, \ it is possible that we have
Eq.(\ref{70}) for small noise intensity and small values of $p$ we have
$\lambda_{p}<0$, hence stability, but surely for sufficiently large $p$ we
have $\ \lambda_{p}>0$ .

\begin{example}
Consider the one dimensional model%
\[
dx_{\omega}(t)=(-adt+\rho dw_{\omega}(t))x_{\omega}(t)
\]
in the It\^{o} formalism, where $w_{\omega}(t)$ is the standard Wiener process
($\mathbb{E}_{\omega}w_{\omega}(t)=0;\mathbb{E}_{\omega}[w_{\omega}(t)]^{2}%
=t$), $a,~\sigma$ are constants. The general solution is%
\[
x_{\omega}(t)=x(0)\exp[-(a+\rho^{2}/2)t+\sigma w_{\omega}(t)]
\]
. Results%
\[
\left[  \mathbb{E}_{\omega}|x_{\omega}(t)|^{p}\right]  ^{1/p}=|x(0)|\exp
(\lambda_{p}t)
\]
with $\lambda_{p}=-a+(p-1)\rho^{2}/2$ .
\end{example}

\begin{conclusion}
In the stochastic case the linear stability is determined by a scale,
$\lambda_{p}\,\ $\ of Liapunov exponents
\end{conclusion}

The function $\lambda_{p}\,$\ was effectively computed only in some special
cases (\cite{SGBW}, \cite{SGXGBW}).

We will see that excepting the small noise, perturbative calculations, in the
physically interesting cases ( partial differential equations,
\ Vlasov-Poisson systems) \ computation of $\lambda_{p}$ for even values of
$p$ is possibly only at the expense of solving eigenvalue problems \ with
spatial dimension increased by a factor $p\,$. \ 

In this moment there are no general method to compute $\lambda_{p}$ , for all
real values of $p$, even for the case when $\widehat{A}$ or $\widehat
{B}_{\omega}(t)$ are $2\times2$ matrices. So we reformulate the Problem(
\ref{ProblemLambda_p}) in a more weak sense as follows

\begin{problem}
\label{ProblemMonomials} For a given degree $p$, find $\lambda_{p}$ such that
for all $\varepsilon>0$ and all monomial $m_{p}(\mathbf{x})$ of degree
$p>1$(i.e. $m_{p}(\mathbf{x})=x^{i_{1}}x^{i_{2}}\cdots x^{i_{n}}$ with
$i_{1}+\cdots+i_{n}=p$ ) . the case $p=1$ is trivial, see Remark(\ref{RemPeq1}%
) ). In the components $x_{i}$ we have
\begin{equation}
\underset{t\rightarrow\infty}{\lim}\frac{\mathbb{E}_{\omega}\left[
m_{p}\left(  \mathbf{x}_{\omega}\left(  t\right)  \right)  \right]  }%
{\exp\left[  p\left(  \lambda_{p}+\varepsilon\right)  t\right]  }=0 \label{80}%
\end{equation}

\end{problem}

In this end we will obtain closed sets of deterministic equations for the
moments%
\begin{equation}
\mathbb{E}_{\omega}\left[  x_{\omega}^{i_{1}}\left(  t\right)  x_{\omega
}^{i_{2}}\left(  t\right)  \cdots x_{\omega}^{i_{n}}\left(  t\right)
\ \right]  \label{90}%
\end{equation}

\section{The deterministic equations for the moments.}

\subsection{The evolution of the mean values}

For simplicity consider first the finite dimensional case, denote in general
$\mathbf{x}=(x^{1},x^{2},...x^{n})$.We have the following

\begin{remark}
\label{BacwardKolmogorov} (Backward Kolmogorov equation) Consider a general
stochastic process $\mathbf{x}_{\omega}(t)$, described by the following SDE
and initial conditions
\begin{align}
dx_{\omega}^{i}(t) &  =V^{i}(\mathbf{x}_{\omega}(t))dt+\sigma_{a}%
^{i}(\mathbf{x}_{\omega}(t)\mathbf{)}dw_{\omega}^{a}(t);1\leq i\leq n;1\leq
a\leq A\label{100}\\
\mathbb{E}_{\omega}\left[  \ \ w_{\omega}^{a}(t)\right]   &  =0;\mathbb{E}%
_{\omega}\left[  \ \ w_{\omega}^{a}(t)w_{\omega}^{b}(t^{\prime})\right]
=\delta^{a,b}\min\left(  t,t^{\prime}\right)  \label{110}\\
\mathbf{x}_{\omega,\mathbf{y}}(0) &  =\mathbf{x}_{0}\label{130}%
\end{align}
For for any differentiable function $F(\mathbf{x})$ we have (for the sake of
simplicity of the notations, the subscript $\omega$ will be omitted)
\begin{align}
dF(\mathbf{x}_{\omega}(t)) &  =\frac{\partial F}{\partial x^{i}}\left[
V^{i}(\mathbf{x}_{\omega}(t))dt+\sigma_{a}^{i}(\mathbf{x}_{\omega
}(t)\mathbf{)}dw_{\omega}^{a}(t)\right]  +\\
&  \frac{1}{2}\frac{\partial^{2}F}{\partial x^{j}\partial x^{i}}%
D^{i,j}(\mathbf{x})dt\nonumber\\
D^{i,j}(\mathbf{x}) &  =%
{\displaystyle\sum_{a=1}^{A}}
\sigma_{a}^{i}(\mathbf{x})\sigma_{a}^{j}(\mathbf{x})\label{140}%
\end{align}
\ We denote%
\begin{equation}
M_{F}(t)=\mathbb{E}_{\omega}\left[  F\left(  \mathbf{x}_{\omega}\left(
t\right)  \right)  \right]  \ \label{150}%
\end{equation}
Then we have
\begin{equation}
\frac{M_{F}(t)}{dt}=\mathbb{E}_{\omega}\left[  \frac{\partial F(\mathbf{x}%
_{\omega}(t))}{\partial x^{i}}V^{i}(\mathbf{x}_{\omega}(t))+\frac{1}{2}%
\frac{\partial^{2}F}{\partial x^{j}\partial x^{i}}D^{i,j}(\mathbf{x}_{\omega
}\left(  t\right)  )\right]  \ \label{160}%
\end{equation}

\end{remark}

It follows from Eqs.(\ref{160}, \ref{140}) that in the case when the
SDE(\ref{100}) is linear, it is possible to obtain a closed set of linear
differential equations \ when $F(\mathbf{x})$ belongs to a finite dimensional
subspace of homogenous polynomials. \ 

We consider now the case when Eq.(\ref{100}) is a linear SDE, like
Eq.(\ref{60}). So \
\begin{align}
V^{i}(\mathbf{x})  &  =A_{j}^{i}x^{j}\label{170}\\
\sigma_{a}^{i}(\mathbf{x)}  &  =\rho_{j,a}^{i}x^{j} \label{180}%
\end{align}

\bigskip We obtain from Eqs.(\ref{140}, \ref{180}, \ref{61.1})
\begin{equation}
D^{i,j}(\mathbf{x})=C_{m,n}^{i,j}~x^{m}~x^{n} \label{190}%
\end{equation}

We obtain from Eqs.(\ref{160}, \ref{170}, \ref{190}) the basic equation for
further development\bigskip%
\begin{equation}
\frac{M_{F}(t)}{dt}=\mathbb{E}_{\omega}\left[  \frac{\partial F(\mathbf{x}%
_{\omega}(t))}{\partial x^{i}}A_{j}^{i}x_{\omega}^{j}(t)+\frac{1}{2}%
\frac{\partial^{2}F}{\partial x^{j}\partial x^{i}}C_{m,n}^{i,j}~x_{\omega}%
^{m}(t)~x_{\omega}^{n}(t)\right]  \label{200}%
\end{equation}

\subsection{Linear equations for the moments of order $N$.}

\subsubsection{Case $N=1.$}

Consider, for more clarity reasons, in Eq.(\ref{200}) $F(\mathbf{x}%
\ )\overset{def}{=}x^{k}$, where $\ k$ is fixed integer index and denote in
general $Y_{1}^{j}(t)=$ $\mathbb{E}_{\omega}\left[  \ x_{\omega}%
^{j}(t)\right]  $ , as in Remark(\ref{RemPeq1}).

\emph{Recall, }$x^{k}$\emph{ is a contravariant vector component, not power. }

Then we obtain Eq.(\ref{61.4}), or
\begin{align}
\frac{dY_{1}^{k}}{dt} &  =A_{j}^{i}Y_{1}^{j}(t)\label{210}\\
\frac{d\mathbf{Y}_{1}(t)}{dt} &  =\widehat{A}\mathbf{Y}_{1}(t)
\end{align}

The corresponding stability problem is the classical, deterministic

\begin{problem}
\label{ProblemDeterministicStability} Compute the eigenvalues, \ left and
right eigenvectors in the eigenvalue problems
\begin{align*}
\ \lambda\mathbf{u} &  =\widehat{A}\mathbf{u;u\in V;}\\
\lambda\mathbf{v} &  =\widehat{A}^{T}\mathbf{v;~v\in V}^{\ast};\mathbf{v~}%
is~left~eigenvector
\end{align*}

\end{problem}

\subsubsection{Case $N=2$}

Consider in Eq.(\ref{200}) $F(\mathbf{x}\ )\overset{def}{=}x^{k}x^{l}$ and
denote $Y_{2}^{k,l}(t)=$ $\mathbb{E}_{\omega}\left[  \ x_{\omega}%
^{k}(t)x_{\omega}^{l}(t)\right]  $. Then we obtain%
\begin{equation}
\frac{dY_{2}^{k,l}}{dt}=A_{j}^{k}Y_{2}^{j,l}(t)+A_{j}^{l}Y_{2}^{k,j}%
(t)+C_{m,n}^{k.l}Y_{2}^{m,n}(t) \label{220}%
\end{equation}

In condensed notation, taking into account that to $Y_{2}^{j,l}(t)$ we can
associate a tensor $\mathbf{Y}_{2}(t)\in\mathbf{V\otimes V}$ and to
$C_{m,n}^{k.l}$ the operator $\widehat{\mathbf{C}}$ (see Eq.(\ref{61.3})) we
have the equivalent form of Eq.(\ref{220})$_{{}}$%
\begin{equation}
\frac{d\mathbf{Y}_{2}(t)}{dt}=\left[  \widehat{A}\mathbf{\otimes}%
\widehat{\boldsymbol{1}}+\widehat{\boldsymbol{1}}\mathbf{\otimes}\widehat
{A}\mathbf{+}\widehat{\mathbf{C}}\right]  \mathbf{Y}_{2}(t) \label{230}%
\end{equation}

\subsubsection{Case $N=3$}

Similarly we consider $F(\mathbf{x}\ )\overset{def}{=}x^{k}x^{l}x^{i}$ and
denote $Y_{3}^{k,l,i}(t)=$ $\mathbb{E}_{\omega}\left[  \ x_{\omega}%
^{k}(t)x_{\omega}^{l}(t)x_{\omega}^{i}(t)\right]  $ . Then we obtain%
\begin{align}
\frac{dY_{3}^{k,l,i}}{dt}  &  =A_{j}^{k}Y_{3}^{j,l.i}\ +A_{j}^{l}Y_{3}%
^{k,j,i}\ +A_{j}^{i}Y_{3}^{k,l,j}\ +\label{240}\\
&  C_{m,n}^{k.l}Y_{3}^{m,n,i}\ +C_{m,n}^{k.i}Y_{3}^{m,l,n}+C_{m,n}^{l,i}%
Y_{3}^{k,m,n}\nonumber
\end{align}

In order to write in a more compact form, we introduce $\mathbf{Y}_{3}%
(t)\in\mathbf{V\otimes V\otimes V}$ with components $Y_{3}^{k,l,i}$ and the
operators \ $\widehat{\mathbf{C}}_{1,2}$, $\widehat{\mathbf{C}}_{1,3}$, and
$\widehat{\mathbf{C}}_{2,3},$ acting in $\mathbf{V\otimes V\otimes V}$
according to the \ last terms in Eq. (\ref{240}), and we obtain%

\begin{align}
\frac{d\mathbf{Y}_{3}(t)}{dt}  &  =(\widehat{A}\mathbf{\otimes}\widehat
{\boldsymbol{1}}\mathbf{\otimes}\widehat{\boldsymbol{1}}+\widehat
{\boldsymbol{1}}\mathbf{\otimes}\widehat{A}\mathbf{\otimes}\widehat
{\boldsymbol{1}}\mathbf{+}\widehat{\boldsymbol{1}}\mathbf{\otimes
\widehat{\boldsymbol{1}}\otimes}\widehat{A}+\label{250}\\
&  \widehat{\mathbf{C}}_{1,2}+\widehat{\mathbf{C}}_{1,3}+\widehat{\mathbf{C}%
}_{2,3})\mathbf{Y}_{3}(t)\nonumber
\end{align}

\subsubsection{The general $m$ case}

With the choice $F(\mathbf{x}\ )\overset{def}{=}%
{\displaystyle\prod_{j=1}^{m}}
x^{k_{j}}$ and notation $Y_{m}^{k_{1},\cdots,k_{m}}(t)=$ $\mathbb{E}_{\omega
}\left[  \
{\displaystyle\prod_{j=1}^{m}}
x_{\omega}^{k_{j}}(t)\right]  $, from Eq.(\ref{200}) results%
\begin{align}
\frac{dY_{m}^{k_{1},\cdots,k_{m}}}{dt}  &  =A_{j}^{k_{1}}Y_{m}^{j,k_{2}%
,\cdots,k_{m}}\ +A_{j}^{k_{2}}Y_{m}^{k_{1},j,\cdots,k_{m}}\ +A_{j}^{k_{m}%
}Y_{m}^{k_{1},\cdots,j}\ +\label{260}\\
&
{\displaystyle\sum_{a<b}}
C_{j,l}^{k_{a}.k_{b}}Y_{m}^{k_{1},\cdots k_{a-1},j,\cdots,k_{b-1}%
,l,\cdots,k_{m}}\ \nonumber
\end{align}

Similarly we have the compact form%
\begin{align}
\frac{d\mathbf{Y}_{m}(t)}{dt}  &  =[\widehat{A}\mathbf{\otimes}\left(
\widehat{\boldsymbol{1}}\right)  ^{\otimes(m-1)}+\widehat{\boldsymbol{1}%
}\mathbf{\otimes}\widehat{A}\mathbf{\otimes\left(  \widehat{\boldsymbol{1}%
}\right)  ^{\otimes(m-2)}+}\label{270}\\
&  \mathbf{\cdots+\left(  \widehat{\boldsymbol{1}}\right)  ^{\otimes
(m-1)}\otimes}\widehat{A}+%
{\displaystyle\sum_{a<b}}
\widehat{\mathbf{C}}_{a,b}]\mathbf{Y}_{m}(t)\nonumber\\
&  \overset{def.}{=}\widehat{U}_{m}(\widehat{\mathbf{C}})\mathbf{Y}_{m}(t)
\label{271}%
\end{align}

where the action of the operator $\widehat{\mathbf{C}}_{a,b}$ in the tensor
product space $\mathbf{V}^{\mathbf{\otimes m}}$ is defined according to the
corresponding term in Eq.(\ref{260}). Then the large time behavior of
$\mathbf{Y}_{m}(t)$ is dominated by the term $\exp(\lambda_{m}t)\mathbf{Z}%
_{m}$ where $\lambda_{m}$ , in the finite dimensional case, is the eigenvalue
with largest real part in
\begin{equation}
\lambda\mathbf{Z}_{m}=\widehat{U}_{m}(\widehat{\mathbf{C}})\mathbf{Z}_{m}
\label{273}%
\end{equation}

For application of the perturbative analysis in the limit of weak noise
intensity (in Eq.(\ref{270}) the contribution of $\widehat{\mathbf{C}}_{a,b}$
is small) it \ is useful to remark that, Eq.(\ref{273}) can be put in the form%
\begin{equation}
\lambda\mathbf{Y}_{m}=\left(  \widehat{M}_{m}\ +\delta\widehat{M}_{m}\right)
\mathbf{Y}_{m} \label{274}%
\end{equation}

where the unperturbed term is
\begin{align}
\widehat{M}_{m}  &  =\widehat{A}\mathbf{\otimes}\left(  \widehat
{\boldsymbol{1}}\right)  ^{\otimes(m-1)}+\widehat{\boldsymbol{1}%
}\mathbf{\otimes}\widehat{A}\mathbf{\otimes\left(  \widehat{\boldsymbol{1}%
}\right)  ^{\otimes(m-2)}+}\label{275}\\
&  \mathbf{\cdots+\left(  \widehat{\boldsymbol{1}}\right)  ^{\otimes
(m-1)}\otimes}\widehat{A}\nonumber
\end{align}

and the perturbation is%
\begin{equation}
\delta\widehat{M}_{m}=%
{\displaystyle\sum_{a<b}}
\widehat{\mathbf{C}}_{a,b} \label{276}%
\end{equation}

\begin{remark}
If the deterministic stability problem \ref{ProblemDeterministicStability} is
solved then \ the spectrum (including eigenvalues), , left and right
eigenvector of $\widehat{M}_{m}$ \ from Eq. (\ref{275})can be computed immediately.
\end{remark}

\subsubsection{Perturbative approach.}

\paragraph{Perturbation of general, not necessary self adjoint operators.}

These results are generalizations of the perturbations methods from quantum
mechanics to the case when the operators are not necessary self adjoint.

The formalism exposed above is useful to derive perturbative methods,to solve
the eigenvalue problems like from Eq.(\ref{273}).Consider a general complex
linear vector space $\mathbf{U}$ Suppose that the deterministic linear
stability analysis was performed, so we know the eigenvalues, $\nu_{k}$, as
well as the right eigenvectors $\mathbf{u}_{R,k}\in\mathbf{U}$ and left
eigenvectors $\mathbf{v}_{L,k}\in\mathbf{U}^{\ast}$ of the operator
$\widehat{M}\ $. \ The left eigenvectors, in matrix notations, can be
associated to the eigenvectors of the transposed $\widehat{M}^{T}$ . Denote
the linear function on $\mathbf{U}$ associated to $\mathbf{x}^{\ast}%
\in\mathbf{U}^{\ast}$ by $\mathbf{x\rightarrow}\left\langle \mathbf{x}^{\ast
},\mathbf{x}\right\rangle $ . The transposed $\widehat{M}^{T}$ is defined by
$\ $%
\begin{equation}
\left\langle \mathbf{x}^{\ast},\widehat{M}\mathbf{x}\right\rangle
=\left\langle \widehat{M}^{T}\mathbf{x}^{\ast},\mathbf{x}\right\rangle
\label{pert0}%
\end{equation}

The spectrum of $\widehat{M}$ and $\widehat{M}^{T}$ identical in very general
case. In particular in finite dimensions it results from the property of
\ determinants. Much of our results are explained in the finite dimensional
case and we do not discuss the specific problems related to continuos spectrum
and unbounded operators . \ 

\subparagraph{Simple eigenvalue}

\ \ So to each eigenvalue $\nu_{k}$ corresponds the right eigenvectors
$\mathbf{\ u}_{R,k}\in\mathbf{U}$, \ and the left eigenvectors $\mathbf{v}%
_{L,k}\in\mathbf{U}^{\ast}$ . We have $\mathbf{\ }$%
\begin{align}
\widehat{M}\mathbf{u}_{R,k}  &  =\nu_{k}\mathbf{u}_{R,k}\label{pert1}\\
\widehat{M}^{T}\mathbf{v}_{L,k}  &  =\nu_{k}\mathbf{v}_{L,k} \label{pert2}%
\end{align}

\begin{proposition}
\label{propositionSimpleEigenvaluePerturbation} The eigenvalues $\nu
_{k}+\delta\nu_{k}$ of the perturbed operator $\widehat{M}+\delta\widehat{M}$
can be computed in first approximation order by
\begin{equation}
\delta\nu_{k}=\frac{\left\langle \mathbf{v}_{L,k},\delta\widehat{M}%
\mathbf{u}_{R,k}\right\rangle }{\left\langle \mathbf{v}_{L,k},\mathbf{u}%
_{R,k}\right\rangle } \label{pert3}%
\end{equation}

\end{proposition}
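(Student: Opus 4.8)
The plan is to carry out the standard Rayleigh--Schr\"{o}dinger argument, adapted to the non-self-adjoint setting where the bilinear pairing $\langle\cdot,\cdot\rangle$ replaces the Hermitian inner product and the left eigenvector $\mathbf{v}_{L,k}$ plays the role that the conjugate of the right eigenvector would play in the self-adjoint case. First I would posit that both the eigenvalue and the right eigenvector depend smoothly on the perturbation, writing $\nu_{k}\rightarrow\nu_{k}+\delta\nu_{k}$ and $\mathbf{u}_{R,k}\rightarrow\mathbf{u}_{R,k}+\delta\mathbf{u}_{R,k}$, and substitute these into the perturbed eigenvalue equation $(\widehat{M}+\delta\widehat{M})(\mathbf{u}_{R,k}+\delta\mathbf{u}_{R,k})=(\nu_{k}+\delta\nu_{k})(\mathbf{u}_{R,k}+\delta\mathbf{u}_{R,k})$.

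Next I would expand and discard all products of two first-order quantities, retaining only the terms linear in the perturbation. Using the unperturbed relation (\ref{pert1}), the zeroth-order terms cancel, leaving the first-order balance
\[
\widehat{M}\,\delta\mathbf{u}_{R,k}+\delta\widehat{M}\,\mathbf{u}_{R,k}=\nu_{k}\,\delta\mathbf{u}_{R,k}+\delta\nu_{k}\,\mathbf{u}_{R,k}.
\]
The crucial step is to pair both sides with the left eigenvector via $\langle\mathbf{v}_{L,k},\cdot\rangle$. Invoking the definition of the transpose (\ref{pert0}) together with the left-eigenvector equation (\ref{pert2}), one has $\langle\mathbf{v}_{L,k},\widehat{M}\,\delta\mathbf{u}_{R,k}\rangle=\langle\widehat{M}^{T}\mathbf{v}_{L,k},\delta\mathbf{u}_{R,k}\rangle=\nu_{k}\langle\mathbf{v}_{L,k},\delta\mathbf{u}_{R,k}\rangle$; this cancels exactly against the term $\nu_{k}\langle\mathbf{v}_{L,k},\delta\mathbf{u}_{R,k}\rangle$ coming from the right-hand side, eliminating the unknown eigenvector correction $\delta\mathbf{u}_{R,k}$ entirely. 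What survives is $\langle\mathbf{v}_{L,k},\delta\widehat{M}\,\mathbf{u}_{R,k}\rangle=\delta\nu_{k}\,\langle\mathbf{v}_{L,k},\mathbf{u}_{R,k}\rangle$, and dividing by $\langle\mathbf{v}_{L,k},\mathbf{u}_{R,k}\rangle$ yields (\ref{pert3}).

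The main obstacle is precisely the division in the last step: it requires $\langle\mathbf{v}_{L,k},\mathbf{u}_{R,k}\rangle\neq0$. I would argue that this non-degeneracy is guaranteed exactly by the hypothesis that $\nu_{k}$ is a \emph{simple} eigenvalue --- a vanishing pairing between the left and right eigenvectors would signal a Jordan block or a mismatch between geometric and algebraic multiplicity, which a simple eigenvalue excludes. In the finite-dimensional case this follows from the spectral decomposition, in which the rank-one spectral projector attached to $\nu_{k}$ is $\mathbf{u}_{R,k}\langle\mathbf{v}_{L,k},\cdot\rangle/\langle\mathbf{v}_{L,k},\mathbf{u}_{R,k}\rangle$; normalizing so that $\langle\mathbf{v}_{L,k},\mathbf{u}_{R,k}\rangle=1$ streamlines the formula. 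A secondary point worth flagging is the tacit analyticity assumption that the perturbed eigenpair admits a first-order expansion at all; this is standard for isolated simple eigenvalues but would require separate justification in the infinite-dimensional or continuous-spectrum setting that the paper otherwise sidesteps.
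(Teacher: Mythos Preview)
Your argument is correct and follows essentially the same route as the paper: expand the perturbed eigenvalue equation, drop second-order terms using (\ref{pert1}), pair with $\mathbf{v}_{L,k}$, and use (\ref{pert0})--(\ref{pert2}) to cancel the $\delta\mathbf{u}_{R,k}$ contribution. Your additional remarks on the non-vanishing of $\langle\mathbf{v}_{L,k},\mathbf{u}_{R,k}\rangle$ and on the implicit analyticity hypothesis are welcome clarifications that the paper's own proof leaves unstated.
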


\begin{proof}
The perturbed right eigenvector will be represented as $\mathbf{u}%
_{R,k}+\delta\mathbf{u}_{R,k}$.Results%
\begin{equation}
\left(  \widehat{M}+\delta\widehat{M}\right)  \left(  \mathbf{u}_{R,k}%
+\delta\mathbf{u}_{R,k}\right)  =\left(  \nu_{k}+\delta\nu_{k}\right)  \left(
\mathbf{u}_{R,k}+\delta\mathbf{u}_{R,k}\right)  \label{pert4}%
\end{equation}

We use now \ref{pert1} and drop the second order terms. We obtain
\begin{equation}
\widehat{M}\delta\mathbf{u}_{R,k}+\delta\widehat{M}\mathbf{u}_{R,k}%
=v_{k}\delta\mathbf{u}_{R,k}+\delta\nu_{k}\mathbf{u}_{R,k} \label{pert5}%
\end{equation}

By left multiplication with $\mathbf{u}_{L,k}$ results%
\begin{equation}
\left\langle \mathbf{v}_{L,k},\widehat{M}\delta\mathbf{u}_{R,k}\right\rangle
+\left\langle \mathbf{v}_{L,k},\delta\widehat{M}\mathbf{u}_{R,k}\right\rangle
=v_{k}\left\langle \mathbf{v}_{L,k},\delta\mathbf{u}_{R,k}\right\rangle
+\delta\nu_{k}\left\langle \mathbf{v}_{L,k},\mathbf{u}_{R,k}\right\rangle
\label{pert6}%
\end{equation}

By using Eqs.(\ref{pert0}, \ref{pert2}) \ , the first term in l.h.s. in
Eq.(\ref{pert6}) can be rewritten as $\left\langle \mathbf{v}_{L,k}%
,\widehat{M}\delta\mathbf{u}_{R,k}\right\rangle =\left\langle \widehat{M}%
^{T}\mathbf{v}_{L,k},\delta\mathbf{u}_{R,k}\right\rangle =v_{k}\left\langle
\mathbf{v}_{L,k},\delta\mathbf{u}_{R,k}\right\rangle $ .Combined with
Eq.(\ref{pert6}) we obtain Eq.(\ref{pert3}), that completes the proof.
\end{proof}

\subparagraph{Multiple eigenvalue, semisimple (without Jordan blocks)}

Must of physical cases from this category are related to some discrete or
continuos symmetry group, and the explicit symmetry breaking due to asymmetric
perturbation ( Like Zeeman or Stark effects). \ Let $v_{k}$ an eigenvalue of
multiplicity $m$. \ The subspace of left and right eigenvectors has the same
dimension. Let $\mathbf{v}_{L,k,a}$ respectively $\mathbf{u}_{L,k,a}$ , with
$1\leq a\leq m$ the set of left and right eigenvectors%
\begin{align}
\widehat{M}\mathbf{u}_{R,k,a}  &  =\nu_{k}\mathbf{u}_{R,k,a}\label{pert7}\\
\widehat{M}^{T}\mathbf{v}_{L,k,a}  &  =\nu_{k}\mathbf{v}_{L,k,a} \label{pert8}%
\end{align}

Define the $m\times m$ matrices $\widehat{F}$ and $\widehat{G}$ \ by matrix
elements as follows%
\begin{align}
F_{a,b}  &  =\left\langle \mathbf{v}_{L,k,a},\widehat{M}\mathbf{u}%
_{R,k,b}\right\rangle \label{pert9}\\
G_{a,b}  &  =\left\langle \mathbf{v}_{L,k,a},\mathbf{u}_{R,k,b}\right\rangle
\label{pert10}%
\end{align}

Denote by $\delta\nu_{k,a}=z_{k}$ with $1\leq a\leq m$ , the sequence of $m$
roots $z_{k}$ of the secular equation in $z$%
\begin{equation}
\det\left[  \widehat{F}-z\widehat{G}\right]  =0 \label{pert11}%
\end{equation}

We have the following

\begin{proposition}
\label{PropositionSemisimpleDegenerateEigenval} \ The multiplicity, $m$
semisimple eigenvalue $\nu_{k}$ under the effect of perturbation\ splits into
$m$ eigenvalues
\begin{equation}
\nu_{k}\rightarrow\nu_{k}+\delta\nu_{k,a} \label{pert12}%
\end{equation}

\end{proposition}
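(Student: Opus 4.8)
The plan is to carry over the degenerate perturbation theory familiar from quantum mechanics, but now for a non-self-adjoint $\widehat{M}$ and with the essential feature that the zeroth-order eigenvectors inside the $m$-dimensional degenerate subspace are \emph{not} fixed in advance: the perturbation itself selects the correct linear combinations. First I would write the eigenvector belonging to a split eigenvalue $\nu_k+\delta\nu$ as
\begin{equation}
\mathbf{u}=\sum_{b=1}^{m}c_{b}\,\mathbf{u}_{R,k,b}+\delta\mathbf{u},
\end{equation}
where the coefficients $c_{b}$ (encoding the good zeroth-order combination) and the correction $\delta\mathbf{u}$ are both unknown, and substitute into $(\widehat{M}+\delta\widehat{M})\mathbf{u}=(\nu_{k}+\delta\nu)\mathbf{u}$. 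The zeroth-order part is satisfied automatically by Eq.~(\ref{pert7}); discarding the second-order products (those containing both $\delta\widehat{M}$ and $\delta\mathbf{u}$, or $\delta\nu\,\delta\mathbf{u}$) leaves the first-order balance
\begin{equation}
(\widehat{M}-\nu_{k})\,\delta\mathbf{u}=\delta\nu\sum_{b}c_{b}\,\mathbf{u}_{R,k,b}-\delta\widehat{M}\sum_{b}c_{b}\,\mathbf{u}_{R,k,b}.
\end{equation}

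The key step is to impose the solvability (Fredholm) condition on this equation. Since $\widehat{M}-\nu_{k}$ is singular, its right-hand side must lie in the range of $\widehat{M}-\nu_{k}$, equivalently it must be annihilated by every left eigenvector $\mathbf{v}_{L,k,a}$. Pairing with $\mathbf{v}_{L,k,a}$ and using Eqs.~(\ref{pert0}, \ref{pert8}) to transfer $\widehat{M}$ onto the left eigenvector, $\langle\mathbf{v}_{L,k,a},(\widehat{M}-\nu_{k})\delta\mathbf{u}\rangle=\langle(\widehat{M}^{T}-\nu_{k})\mathbf{v}_{L,k,a},\delta\mathbf{u}\rangle=0$, so the unknown $\delta\mathbf{u}$ drops out entirely and I obtain, for $1\le a\le m$,
\begin{equation}
\sum_{b}\langle\mathbf{v}_{L,k,a},\delta\widehat{M}\,\mathbf{u}_{R,k,b}\rangle\,c_{b}=\delta\nu\sum_{b}\langle\mathbf{v}_{L,k,a},\mathbf{u}_{R,k,b}\rangle\,c_{b}.
\end{equation}
With the $m\times m$ matrices $F_{a,b}=\langle\mathbf{v}_{L,k,a},\delta\widehat{M}\,\mathbf{u}_{R,k,b}\rangle$ and $G_{a,b}=\langle\mathbf{v}_{L,k,a},\mathbf{u}_{R,k,b}\rangle$ of Eqs.~(\ref{pert9}, \ref{pert10}), this is exactly the generalized eigenvalue problem $\widehat{F}\,\mathbf{c}=\delta\nu\,\widehat{G}\,\mathbf{c}$, whose nontrivial solutions $\mathbf{c}$ require the secular equation $\det[\widehat{F}-\delta\nu\,\widehat{G}]=0$ of Eq.~(\ref{pert11}). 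Its $m$ roots are the corrections $\delta\nu_{k,a}$ in Eq.~(\ref{pert12}), and each accompanying vector $\mathbf{c}$ identifies the corresponding good zeroth-order eigenvector.

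The main obstacle I anticipate is justifying that $\widehat{G}$ is nonsingular, so that the pencil $\widehat{F}-\delta\nu\,\widehat{G}$ is genuinely regular and the secular equation has exactly $m$ finite roots; this is precisely where the semisimplicity hypothesis is used. In finite dimensions a semisimple eigenvalue satisfies the direct-sum decomposition $\mathbf{V}=\ker(\widehat{M}-\nu_{k})\oplus\operatorname{Ran}(\widehat{M}-\nu_{k})$, with the right eigenspace as the first summand of dimension $m$; correspondingly the left eigenspace $\ker(\widehat{M}^{T}-\nu_{k})$ is the annihilator of $\operatorname{Ran}(\widehat{M}-\nu_{k})$ and also has dimension $m$. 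Hence the pairing $\langle\cdot,\cdot\rangle$ restricted to $\ker(\widehat{M}^{T}-\nu_{k})\times\ker(\widehat{M}-\nu_{k})$ is nondegenerate: a left eigenvector that pairs to zero with the whole right eigenspace already annihilates $\operatorname{Ran}(\widehat{M}-\nu_{k})$, hence annihilates all of $\mathbf{V}$ and must vanish. Taking the $\mathbf{v}_{L,k,a}$ and $\mathbf{u}_{R,k,b}$ as bases of these two spaces makes $\widehat{G}$ the invertible Gram matrix of this nondegenerate pairing, so the pencil reduces to the ordinary eigenvalue problem for $\widehat{G}^{-1}\widehat{F}$, whose $m$ eigenvalues are the first-order splittings, completing the argument.
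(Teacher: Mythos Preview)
Your proposal is correct and is precisely the argument the paper points to: the paper's own proof is the single sentence ``Combination of the previous proof and from the degenerate case from quantum mechanical textbooks,'' and what you have written is exactly that combination spelled out in full, including the extra step of using semisimplicity to guarantee $\widehat{G}$ is invertible. Note also that your $F_{a,b}=\langle\mathbf{v}_{L,k,a},\delta\widehat{M}\,\mathbf{u}_{R,k,b}\rangle$ is the intended definition; the paper's Eq.~(\ref{pert9}) with $\widehat{M}$ in place of $\delta\widehat{M}$ is evidently a typo, since otherwise $\widehat{F}=\nu_k\widehat{G}$ and the secular equation would be vacuous.
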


\begin{proof}
Combination of the previous proof and from the degenerate case from quantum
mechanical textbooks.
\end{proof}

\subsubsection{The case of diagonal noise}

Our equations will be simplified in the particular case, when the noise
in\ SDE(\ref{60})\ has the form%
\begin{equation}
\rho_{j,a}^{i}=\delta_{j}^{i}r_{a}^{i} \label{280}%
\end{equation}

In this case%
\begin{align}
C_{j,n}^{i,m}  &  =\delta_{j}^{i}\delta_{n}^{m}V_{2}^{i,m}\label{290}\\
V_{2}^{i,m}  &  =\sum_{a=1}^{A}\rho_{j,a}^{i}\rho_{n,a}^{m} \label{300}%
\end{align}

The Eq.(\ref{260}) became%
\begin{align}
\frac{dY_{m}^{k_{1},\cdots,k_{m}}}{dt}  &  =A_{j}^{k_{1}}Y_{m}^{j,k_{2}%
,\cdots,k_{m}}\ +A_{j}^{k_{2}}Y_{m}^{k_{1},j,\cdots,k_{m}}\ +A_{j}^{k_{m}%
}Y_{m}^{k_{1},\cdots,j}\ +\label{310}\\
&
{\displaystyle\sum_{a<b}}
V_{2}^{a,b}Y_{m}^{k_{1},k_{2},\cdots,k_{m}}\nonumber
\end{align}

\section{ Langmuir waves in turbulent background\textbf{ }}

\subsection{Statement of the problem}

The evolution of the electrostatic potential in Langmuir waves in homogenous
finite temperature plasma, after rescaling $\mathbf{x\rightarrow}\sqrt
{3}V_{thermal}\mathbf{x}$ is given by the Klein-Gordon equation%

\begin{align}
\frac{\partial^{2}\phi(\mathbf{x},t)}{\partial t^{2}}  &  =\Delta\phi
-m^{2}\phi\label{L1}\\
m^{2}  &  =\omega_{p,e}^{2}=\frac{n_{e}e^{2}}{\varepsilon_{0}m_{e}}\nonumber
\end{align}

In the case of perturbation of the background electron density $n_{e}%
\rightarrow n_{e}+const\ \zeta(\mathbf{x},t)$ by a temporal white noise
$\ \ $with
\begin{align*}
\mathbb{E}\left[  \zeta(\mathbf{x},t)\right]   &  =0\\
\mathbb{E}\left[  \zeta(\mathbf{x},t)\zeta(\mathbf{x}^{\prime},t^{\prime
})\right]   &  =\delta(t-t^{\prime})C(\mathbf{x},\mathbf{x}^{\prime})
\end{align*}

formally can be written as
\[
\frac{\partial^{2}\phi(\mathbf{x},t)}{\partial t^{2}}=\Delta_{\mathbf{x}}%
\phi(\mathbf{x},t)-\left[  m^{2}+\zeta(\mathbf{x},t)\right]  \phi
\]

\ Or in a more rigorous form, by introducing the Brownian motion
$w(\mathbf{x},t)$ such that formally $\zeta(\mathbf{x},t)=\ \frac{\partial
w_{\omega}(\mathbf{x},t)}{\partial t}$ , we obtain the system of first order
PSDE%
\begin{align}
d\phi_{\omega}(\mathbf{x},t)  &  =p_{\omega}(\mathbf{x},t)dt\label{L2}\\
dp_{\omega}(\mathbf{x},t)  &  =-\widehat{K}_{\mathbf{x}}\phi_{\omega}dt+\phi
dw_{\omega}(\mathbf{x},t) \label{L3}%
\end{align}

where $\widehat{K_{\mathbf{x}}}=m^{2}-\Delta_{\mathbf{x}}$ and $\mathbb{E}%
_{\omega}\left[  w_{\omega}(\mathbf{x},t)w_{\omega}(\mathbf{x}^{\prime
},t^{\prime})\right]  =\min(t,t^{\prime})C(\mathbf{x},\mathbf{x}^{\prime})$
.$\ $ \ 

Denote%
\begin{align*}
Y_{\phi,\phi}(\mathbf{x}_{1},\mathbf{x}_{2},t)  &  =\mathbb{E}_{\omega}\left[
\Phi_{\omega}(\mathbf{x}_{1},t)\Phi_{\omega}(\mathbf{x}_{2},t)\right] \\
Y_{p,\phi}(\mathbf{x}_{1},\mathbf{x}_{2},t)  &  =\mathbb{E}_{\omega}\left[
p_{\omega}(\mathbf{x}_{1},t)\Phi_{\omega}(\mathbf{x}_{2},t)\right] \\
Y_{\phi,p}(\mathbf{x}_{1},\mathbf{x}_{2},t)  &  =\mathbb{E}_{\omega}\left[
\Phi_{\omega}(\mathbf{x}_{1},t)p_{\omega}(\mathbf{x}_{2},t)\right] \\
Y_{p,p}(\mathbf{x}_{1},\mathbf{x}_{2},t)  &  =\mathbb{E}_{\omega}\left[
p_{\omega}(\mathbf{x}_{1},t)p_{\omega}(\mathbf{x}_{2},t)\right]
\end{align*}

By the same procedure we obtain the following system of linear equation for
the second order correlation functions%
\begin{align}
\frac{dY_{\phi,\phi}(\mathbf{x}_{1},\mathbf{x}_{2},t)}{dt}  &  =Y_{p,\phi
}(\mathbf{x}_{1},\mathbf{x}_{2},t)+Y_{\phi,p}(\mathbf{x}_{1},\mathbf{x}%
_{2},t)\label{L3.1}\\
\frac{dY_{p,\phi}(\mathbf{x}_{1},\mathbf{x}_{2},t)}{dt}  &  =-\widehat
{K}_{_{\mathbf{x}_{1}}}Y_{\phi,\phi}(\mathbf{x}_{1},\mathbf{x}_{2}%
,t)+Y_{p,p}(\mathbf{x}_{1},\mathbf{x}_{2},t)\label{L3.2}\\
\frac{dY_{\phi,p}(\mathbf{x}_{1},\mathbf{x}_{2},t)}{dt}  &  =Y_{p,p}%
(\mathbf{x}_{1},\mathbf{x}_{2},t)-\widehat{K}_{_{\mathbf{x}_{2}}}Y_{\phi,\phi
}(\mathbf{x}_{1},\mathbf{x}_{2},t)\label{L3.3}\\
\frac{dY_{p,p}(\mathbf{x}_{1},\mathbf{x}_{2},t)}{dt}  &  =-\widehat
{K}_{_{\mathbf{x}_{1}}}Y_{\phi,p}(\mathbf{x}_{1},\mathbf{x}_{2},t)-\widehat
{K}_{_{\mathbf{x}_{2}}}Y_{p,\phi}(\mathbf{x}_{1},\mathbf{x}_{2},t)\label{L3.4}%
\\
&  +C\left(  \mathbf{x}_{1},\ \mathbf{x}_{2}\right)  Y_{\phi,\phi}%
(\mathbf{x}_{1},\mathbf{x}_{2},t)\nonumber
\end{align}

\bigskip In order to study the $t\rightarrow\infty$ behavior of the solutions
of Eqs. (\ref{L3.1}-\ref{L3.4}), we consider the particular class of
solutions
\begin{equation}
Y_{a,b}(\mathbf{x}_{1},\mathbf{x}_{2},t)=\exp(\lambda t)Z_{a,b}(\mathbf{x}%
_{1},\mathbf{x}_{2});a,b\in\{p,\phi\} \label{L3.5}%
\end{equation}
and after simple algebra we find the following problem:

\begin{problem}
\label{PorblemGeneralizedEigenvalue} Find $\lambda$ with maximal real part
from the generalized eigenvalue problem%
\begin{equation}
\left[  \lambda^{4}+2\lambda^{2}\left(  2m^{2}-\Delta_{\mathbf{x}_{1}}%
-\Delta_{\mathbf{x}_{2}}\mathbf{\ }\right)  +\left(  \Delta_{\mathbf{x}_{1}%
}-\Delta_{\mathbf{x}_{2}}\right)  ^{2}-2\lambda C\left(  \mathbf{x}%
_{1},\ \mathbf{x}_{2}\right)  \right]  Z_{\phi,\phi}(\mathbf{x}_{1}%
,\mathbf{x}_{2})=0 \label{L3.6}%
\end{equation}

\end{problem}

The boundary conditions are obtained in a straightforward way from the
boundary conditions of the original deterministic problem.

\subsubsection{Alternative notations}

A more transparent way is given by a compact notation. We will denote%
\[%
\begin{pmatrix}
\Psi_{1}\\
\Psi_{2}%
\end{pmatrix}
=%
\begin{pmatrix}
\phi(\mathbf{x},t)\\
p(\mathbf{x},t)
\end{pmatrix}
\]

\bigskip

$\ $ Define the \textbf{operator valued }$2\times2$ matrices $\widehat
{\mathbf{L}}_{\mathbf{x}}$ and $\widehat{\mathbf{J}}$\ as follows:%
\begin{align*}
\widehat{\mathbf{L}}^{\mathbf{x}}  &  =%
\begin{pmatrix}
\widehat{0} & \widehat{1}\\
-\widehat{K}_{_{\mathbf{x}}} & \widehat{0}%
\end{pmatrix}
\\
\widehat{\mathbf{J}}  &  =%
\begin{pmatrix}
\widehat{0} & \widehat{0}\\
\widehat{1} & \widehat{0}%
\end{pmatrix}
\end{align*}

or by components

Then Eqs.(\ref{L2}, \ref{L3}) became%
\begin{equation}
d\Psi(\mathbf{x},t)=\widehat{L}^{\mathbf{x}}\Psi dt+\widehat{J}\Psi dw
\label{L4}%
\end{equation}

We use the notation
\begin{equation}
Y_{i_{1},i_{2}}(\mathbf{x}_{1},\mathbf{x}_{2},t)=\mathbb{E}_{\omega}\left[
\Psi_{i_{1},\omega}(\mathbf{x}_{1},t)\Psi_{i_{1},\omega}(\mathbf{x}%
_{2},t)\right]  ;~1\leq i_{1},i_{2}\leq2 \label{L5}%
\end{equation}

Respectively Eqs.(\ref{L3.1}-\ref{L3.4}) became%
\begin{align}
\frac{dY_{i_{1},i_{2}}(\mathbf{x}_{1},\mathbf{x}_{2},t)}{dt}  &
=L_{i_{1},j_{1}}^{\mathbf{x}_{1}}Y_{j_{1},i_{2}}+L_{i_{2},j_{2}}%
^{\mathbf{x}_{2}}Y_{i_{1},j_{2}}\label{L6}\\
&  +J_{i_{1},j_{1}}J_{i_{2},j_{2}}Y_{j_{1},j_{2}}\ C\left(  \mathbf{x}%
_{1},\ \mathbf{x}_{2}\right)  ;1\leq i_{1},i_{2},j_{1},j_{2}\leq2\nonumber
\end{align}

Consider the stability problem of the moments $Y_{i_{1},i_{2}}(\mathbf{x}%
_{1},\mathbf{x}_{2},t)$ . From the representation
\begin{equation}
Y_{i_{1},i_{2}}(\mathbf{x}_{1},\mathbf{x}_{2},t)=\exp\left(  i\omega t\right)
Z_{i_{1},i_{2}}(\mathbf{x}_{1},\mathbf{x}_{2}) \label{L7}%
\end{equation}
and Eq.(\ref{L6}) the following eigenvalue problem results
\begin{align}
\ i\omega Z_{i_{1},i_{2}}(\mathbf{x}_{1},\mathbf{x}_{2})  &  =L_{i_{1},j_{1}%
}^{\mathbf{x}_{1}}Z_{j_{1},i_{2}}+L_{i_{2},j_{2}}^{\mathbf{x}_{2}}%
Z_{i_{1},j_{2}}\label{L8}\\
&  +J_{i_{1},j_{1}}J_{i_{2},j_{2}}Z_{j_{1},j_{2}}\ C\left(  \mathbf{x}%
_{1},\ \mathbf{x}_{2}\right) \nonumber
\end{align}
In analogy to the Eq.(\ref{270}) similar equations for the higher order
correlation functions can be obtained.

\subsection{Particular solutions of the eigenvalue equation (\ref{L3.6},
\ref{L8}).}

\subsubsection{Zero noise limit}

This is useful for small noise limit perturbative calculations.

In line with the classical stability analysis we will denote%
\begin{align}
\varepsilon_{\mathbf{k}}  &  =\sqrt{m^{2}+\mathbf{k}^{2}}\label{L8.1}\\
\lambda &  =i\omega\label{L8.2}%
\end{align}

The classical dispersion relation, for the Langmuir waves is
\begin{equation}
\omega_{\mathbf{k}}=\pm\varepsilon_{\mathbf{k}} \label{L9}%
\end{equation}

When $C=0$ we have the unperturbed eigenvectors in Eq.(\ref{L8}) of the form
\begin{align}
Z_{i_{1},i_{2}}(\mathbf{x}_{1},\mathbf{x}_{2})  &  =v_{i_{1},i_{2}}\exp\left[
i\left(  \mathbf{k}_{1}\mathbf{x}_{1}+\mathbf{k}_{2}\mathbf{x}_{2}\right)
\right] \label{L10}\\
v_{i_{1},i_{2}}  &  =u_{i_{1}}u_{i_{2}} \label{L10.1}%
\end{align}

By using Eqs.(\ref{L8}, \ref{L10}, \ref{L10.1}) after simple algebra we find
that the generalization of the classical dispersion relation, for the case of
\textbf{two point correlation function,} from Eq.(\ref{L9}) is,
\begin{equation}
\omega_{\mathbf{k}_{1},\mathbf{k}_{2}}=\pm\varepsilon_{\mathbf{k}_{1}}%
\pm\varepsilon_{\mathbf{k}_{2}} \label{L11}%
\end{equation}

\subsection{}

\subsubsection{Complete spatial
correlations\label{SubSubSectCompleteSpatialCorrelation} (\cite{SXG}) .}

This part is included only for the purpose of clarification the application of
formalism. Can be considered as a limiting case , of the results exposed in
subsection\textbf{ (\ref{SectionPositiveCorrelations})}

Consider the simplest soluble case $C\left(  \mathbf{x}_{1},\ \mathbf{x}%
_{2}\right)  =\sigma^{2}=const$ . In this case the ansatz \ from
Eq.(\ref{L10}) can be used and from \ Eq.(\ref{L8}) results (see Appendix
\ref{AppendixConstantCorrelation}, Eq.(\ref{apend17}))%
\[
4\varepsilon_{\mathbf{k}_{1}}^{2}\varepsilon_{\mathbf{k}_{2}}^{2}%
+2i\omega\sigma^{2}=\left(  \varepsilon_{\mathbf{k}_{1}}^{2}+\varepsilon
_{\mathbf{k}_{2}}^{2}-\omega^{2}\right)  ^{2}%
\]

or by using Eq.(\ref{L8.2})
\begin{equation}
\lambda^{4}+2\lambda^{2}(\varepsilon_{\mathbf{k}_{1}}^{2}+\varepsilon
_{\mathbf{k}_{2}}^{2})+\left(  \varepsilon_{\mathbf{k}_{1}}^{2}-\varepsilon
_{\mathbf{k}_{2}}^{2}\right)  ^{2}=2\lambda\sigma^{2} \label{L12}%
\end{equation}

From Eq.( \ref{L12}) results that if $|\mathbf{k}_{1}|-|\mathbf{k}_{2}|$ is
sufficiently small then we have two real solutions with $\lambda>0$ , so the
random multiplicative perturbation of the background always produce parametric
destabilization of the $2$ point correlation functions $Y_{\phi,\phi
}(\mathbf{x}_{1},\mathbf{x}_{2},t)=\mathbb{E}_{\omega}\left[  \Phi_{\omega
}(\mathbf{x}_{1},t)\Phi_{\omega}(\mathbf{x}_{2},t)\right]  $. Due to the fact
that $Y_{\phi,\phi}(\mathbf{x}_{1},\mathbf{x}_{2},t)$ is symmetric with
respect permutations $\mathbf{x}_{1\leftrightarrows}\mathbf{x}_{2}$ , the
modes with to the modes (now labelled by a couple $(\mathbf{k}_{1}%
,\mathbf{k}_{2})$) with \ $|\mathbf{k}_{1}|\ \approx|\mathbf{k}_{2}|$ has
non-vanishing contributions. For $|\mathbf{k}_{1}|=|\mathbf{k}_{2}|$ we have
for the dominating mode%
\[
\lambda^{3}+4\lambda\varepsilon_{\mathbf{k}}^{2}=2\ \sigma^{2}%
\]

For high $|\mathbf{k|}$ the the exponential growth is dominated by
$\lambda\underset{|\mathbf{k|\rightarrow\infty}}{\asymp}\sigma^{2}%
/(2\varepsilon_{\mathbf{k}}^{2})+\mathcal{O}(1/\varepsilon_{\mathbf{k}}^{8})$
. The most dangerous modes are those with for $|\mathbf{k}_{1}|\ \approx
|\mathbf{k}_{2}|\approx0$. For small $\sigma$ we have $\lambda_{\max}%
\underset{\sigma\mathbf{\rightarrow0}}{\asymp}\sigma^{2}/(2m^{2}%
)+\mathcal{O}(\sigma^{6})$

\subsubsection{Homogenous system}

Suppose that the random field is homogenous, so $C\left(  \mathbf{x}%
_{1},\ \mathbf{x}_{2}\right)  =C\left(  \mathbf{x}_{1}-\mathbf{x}_{2}\right)
$ , with $C\left(  \mathbf{-x}\right)  =C\mathbf{(x})$. We introduce the
"center of mass" and relative distance coordinates $\mathbf{\ }\left(
\mathbf{x}_{1}+\mathbf{x}_{2}\right)  /2$,$\mathbf{r=x}_{1}-\mathbf{x}_{2}$
and use the ansatz in Eq.(\ref{L3.6})%
\begin{equation}
Z_{\phi,\phi}(\mathbf{x}_{1},\mathbf{x}_{2})=\exp\left[  i\mathbf{k~}\left(
\mathbf{x}_{1}+\mathbf{x}_{2}\right)  /2\right]  \Psi(\mathbf{r})\ \label{L13}%
\end{equation}

The resulting generalized eigenvalue problem is%
\begin{equation}
\left[  \lambda^{4}+2\lambda^{2}\left(  2m^{2}+\mathbf{k}^{2}/2-2\Delta
_{\mathbf{x}}\mathbf{\ }\right)  +4\left(  i\mathbf{k}\nabla_{\mathbf{x}%
}\right)  ^{2}-2\lambda C\left(  \mathbf{x}\right)  \right]  \Psi
(\mathbf{r})=0 \label{L14}%
\end{equation}

\paragraph{Temporal and spatial white noise. \cite{SXG}}

we study now the opposite case studied previously in subsubsection
(\ref{SubSubSectCompleteSpatialCorrelation}). Consider the case one
dimensional version of the previous model, with Dirac-Delta correlation
function $\ C\left(  \mathbf{x}_{1},\ \mathbf{x}_{2}\right)  =\sigma^{2}%
\delta(\mathbf{x}_{1}-\mathbf{x}_{2})$. It is known that the bound state
problem in the Schr\"{o}dinger equation in higher dimension with delta
function potential give mathematically inconsistent results \cite{Atkinson}. \ 

The one dimensional version of Eq. (\ref{L14}) is%
\begin{align}
A(\lambda,k)\psi(x)-B(\lambda,k)\psi^{\prime\prime}(x)-2\lambda\sigma
^{2}\delta(x)\psi(x)  &  =0\\
A(\lambda,k)  &  =\lambda^{4}+4\lambda^{2}m^{2}+\lambda^{2}k^{2}\\
B(\lambda,k)  &  =4(\lambda^{2}+k^{2})
\end{align}

From Eq.(\ref{ad1}) results the continuity and jump conditions for the
restriction of $\psi_{\pm}(x)$ \ to the domains $x>0$ respectively $x<0$.
\begin{align}
\underset{x\nearrow0}{\lim}\psi_{-}(x)  &  =\underset{x\ \searrow0}{\lim}%
\psi_{+}(x)\label{ad4}\\
B(\lambda,k)\left[  \underset{x\ \searrow0}{\lim}\psi_{+}^{\prime
}(x)-\underset{x\nearrow0}{\lim}\psi_{-}^{\prime}(x)\right]  \ +2\lambda
\sigma^{2}\psi_{+}(0)  &  =0 \label{ad5}%
\end{align}

The class of physically admissible (because $\lambda$ is complex) solutions
are of the form%
\begin{align}
\psi_{+}(x)  &  =\exp(-\alpha x)\\
\psi_{-}(x)  &  =\exp(\alpha x)\nonumber\\
\operatorname{Re}(\alpha)  &  \geq0\ \label{ad6}%
\end{align}

By straightforward calculations from Eqs.(\ref{ad4}, \ref{ad5}) results%
\begin{align}
B(\lambda,k)\alpha^{2}  &  =A(\lambda,k)\label{ad}\\
2B(\lambda,k)\alpha &  =2\lambda\sigma^{2} \label{ad7}%
\end{align}

And we obtain
\begin{equation}
\sigma^{4}=4(\lambda^{2}+k^{2})(\lambda^{2}+4m^{2}+k^{2}) \label{ad8}%
\end{equation}

From Eqs.(\ref{ad7}, \ref{ad3}) results that for any root of Eq.(\ref{ad8})
with $\operatorname{Re}(\lambda)\geq0$ \ , the condition from Eq.(\ref{ad6})
is fulfilled.

\ It is clear that now there is \ a threshold: when $\sigma^{4}<4k^{2}%
(4m^{2}+k^{2})$ then $\lambda$ is imaginary. For $\sigma^{4}>4k^{2}%
(4m^{2}+k^{2})$ there is a real positive $\lambda$, so the system is unstable.
We remark an important mechanism: for low noise intensity first the large
wavelength structures are destabilized. So by this mechanism there is transfer
of energy from short to the long wavelength. The most sensitive to the
destabilization are the mode with small $k$.

\paragraph{Positive correlations \label{SectionPositiveCorrelations}.}

Consider the solutions of the Eq.(\ref{L14}) \ in the case \ $C(\mathbf{x}%
)\geq0$ and $\ \underset{|\mathbf{x}|\rightarrow\infty}{\lim}C(\mathbf{x})=0$. \ 

\subparagraph{Case $k=0$}

We are interested in the case $\lambda\neq0$, so the generalized eigenvalue
problem can be reformulated as follows.
\begin{equation}
-\ \Delta_{\mathbf{x}}\mathbf{\ }\Psi(\mathbf{r})+\left[  \ -\frac{1}%
{2\lambda}C\left(  \mathbf{x}\right)  \right]  \Psi(\mathbf{r})=-(m^{2}%
+\frac{\lambda^{2}}{4})\Psi(\mathbf{r}) \label{L15}%
\end{equation}

When $0<\lambda<+\infty$ , ,because $C>0$, the lowest eigenvalue $E_{\lambda}$
of the eigenvalue problem$-\ \Delta_{\mathbf{x}}\mathbf{\ }\Psi(\mathbf{r}%
)+\left[  \ -\frac{1}{2\lambda}C\left(  \mathbf{x}\right)  \right]
\Psi_{0,\lambda}(\mathbf{r})=E_{\lambda}\Psi_{0,\lambda}(\mathbf{r})$%
\begin{align}
-\ \Delta_{\mathbf{x}}\mathbf{\ }\Psi(\mathbf{r})+\left[  \ -\frac{1}%
{2\lambda}C\left(  \mathbf{x}\right)  \right]  \Psi_{0,\lambda}(\mathbf{r})
&  =E_{\lambda}\Psi_{0,\lambda}(\mathbf{r})\label{L16}\\
-(m^{2}+\frac{\lambda^{2}}{4})  &  =E_{\lambda}%
\end{align}

is a monotone increasing continuos function, from $-\infty$ to $0$.

Indeed, the term $\ -\frac{1}{2\lambda}C\left(  \mathbf{x}\right)  $ \ is like
an attractive potential well in a ground state problem in the Eq.(\ref{L16}),
formally identical to Schr\"{o}dinger equation. For $\lambda\rightarrow0$ it
is very deep potential well, can be proven rigorously that $E_{\lambda
}\underset{\lambda\rightarrow0}{\lim}E_{\lambda}=-\infty$ . When $\lambda$
increases then there is a critical value $\lambda_{crit}>0$ such that the
bound state problem from Eq.(\ref{L16}) has no more solutions . So we have
$\underset{\lambda\ \nearrow\lambda_{crit}}{\lim}E_{\lambda}=0$. (Previous
asymptotic estimates can be verified by approximating the potential well with
a rectangular one, at least in the case of fast correlation decay).

Results that there is a one and only one value of $\lambda>0$ such that
$E_{\lambda}=-(m^{2}+\frac{\lambda^{2}}{4})$ (because the r.h.s. is
decreasing, from $-m^{2}$ to $\ -\infty$) .

\begin{conclusion}
It follows that there exists at least an eigenvalue $\lambda>0$ in the
generalized eigenvalue problem from Eq.(\ref{L15}), so at least the mode
$\ k=0$ is destabilized for arbitrary small noise intensity.
\end{conclusion}

\subparagraph{Case $|\mathbf{k}|\rightarrow\infty$.}

In \ general, Eq.(\ref{L14}), by suitable choice of the coordinate system
($\mathbf{k\parallel}Oz$), can be reformulated as follows.
\begin{align}
-\left[  \Delta_{\perp}+\left(  1+\left(  \frac{k}{\lambda}\right)
^{2}\right)  \frac{\partial^{2}}{\partial z^{2}}\right]  \Psi_{0,\lambda
}(\mathbf{r})+\left[  \ -\frac{1}{2\lambda}C\left(  \mathbf{x}\right)
\right]  \Psi_{0,\lambda}(\mathbf{r})  &  =E_{\lambda}\Psi_{0,\lambda
}(\mathbf{r})\label{L17}\\
-(m^{2}+\frac{k^{2}}{4}+\frac{\lambda^{2}}{4})  &  =E_{\lambda} \label{L18}%
\end{align}

where $\Delta_{\perp}=\frac{\partial^{2}}{\partial x^{2}}+\frac{\partial^{2}%
}{\partial y^{2}}$. \ Because of the new term $\left(  \frac{k}{\lambda
}\right)  ^{2}\frac{\partial^{2}}{\partial z^{2}}$ , the ground state energy
has a lower bound.

\ Be denoting with $\ \delta$ the typical width of the maximum of $C\left(
\mathbf{x}\right)  $ near $\mathbf{x}=0$, respectively $\ c=C(0)$, for large
$\ k$ \ and small $\lambda$ we have the estimate $E_{\lambda}>\mathcal{O(-}%
\frac{C(0)^{2}\delta}{k^{2}}\mathcal{)}$ . So for fixed large values of $\ k$
there is no real $\lambda$ that satisfies Eq.(\ref{L18}) .

\begin{conclusion}
. For $k>0$ there is a threshold for the intensity of the multiplicative noise
such that the mode remains stable below this threshold. 
\end{conclusion}

\section{Conclusions.}

Similar to the simplest one dimensional case exposed in the works
\cite{Lindberg}, \cite{Gitterman}, the   Langmuir waves in a turbulent
background are destabilized. The energy of the fluctuations related to the
multiplicative noise is transferred with an exponential rate to the long
wavelength fluctuations preferentially. In contradistinction with one
dimensional case, there is a threshold\ in the noise intensity for each mode,
such that below this threshold the mode remains stable. 

\section{Appendix \label{AppendixConstantCorrelation}}

Denote the tensor with components $v_{i_{1},i_{2}}$ from Eq.(\ref{L10}) by
$\mathbf{v}$ . We use the notation Eq.(\ref{8.1}) and define $2\times2$
matrices \ $\widehat{E}(\mathbf{k)},\widehat{\mathbf{a}}$ as follows
\begin{align}
\ \widehat{E}(\mathbf{k)}  &  \mathbf{=}%
\begin{pmatrix}
0 & 1\\
-\varepsilon_{\mathbf{k}}^{2} & 0
\end{pmatrix}
\label{apend1}\\
\widehat{\mathbf{a}}  &  =%
\begin{pmatrix}
0 & 0\\
1 & 0
\end{pmatrix}
\label{apend2}\\
\widehat{\mathbf{u}}_{1}  &  =%
\begin{pmatrix}
1 & 0\\
0 & 0
\end{pmatrix}
\label{apend2.1}\\
\widehat{\mathbf{u}}_{2}  &  =%
\begin{pmatrix}
0 & 0\\
0 & 1
\end{pmatrix}
\label{apend2.2}%
\end{align}

and by $\widehat{\mathbf{1}}\ $ \ the $2\times2\ \ $unit matrix. The
eigenvalue equation (\ref{L8}) with ansatz Eq.(\ref{L10}) became%
\begin{equation}
i\omega\mathbf{v=}\widehat{B}\mathbf{v} \label{append3}%
\end{equation}

where
\begin{equation}
\widehat{B}=\widehat{E}(\mathbf{k}_{1}\mathbf{)\otimes\widehat{\mathbf{1}%
}\ +\widehat{\mathbf{1}}\ \otimes}\widehat{E}(\mathbf{k}_{2}\mathbf{)}%
+\sigma^{2}\mathbf{\widehat{a}\otimes}\widehat{\mathbf{a}} \label{apend4}%
\end{equation}

From Eq. (\ref{append3}) results
\begin{equation}
-\omega^{2}\mathbf{v=}\widehat{B}^{2}\mathbf{v} \label{apend5}%
\end{equation}

We will use the following matrix identities, that can be verified easily%
\begin{align}
\widehat{E}(\mathbf{k)}^{2}  &  =-\varepsilon_{\mathbf{k}}^{2}%
~\mathbf{\widehat{\mathbf{1}}\ ;~~\widehat{a}}^{2}=0\label{apend6}\\
\widehat{E}(\mathbf{k)~\widehat{a}}  &  \mathbf{=}\widehat{\mathbf{u}}%
_{1}~;\mathbf{\widehat{a}~}\widehat{E}(\mathbf{k)=}\widehat{\mathbf{u}}%
_{2}\label{apend7}\\
\widehat{\mathbf{u}}_{1}+\widehat{\mathbf{u}}_{2}  &  =\mathbf{\widehat
{\mathbf{1}}\ } \label{apend8}%
\end{align}

From Eqs.(\ref{apend4}, \ref{apend6}, \ref{apend7}, \ref{apend8}) results%
\begin{align}
\widehat{B}^{2}  &  =-\left(  \varepsilon_{\mathbf{k}_{1}}^{2}+\varepsilon
_{\mathbf{k}_{2}}^{2}\right)  \mathbf{\widehat{\mathbf{1}}\ \otimes
\widehat{\mathbf{1}}}+2\widehat{E}(\mathbf{k}_{1}\mathbf{)\otimes\widehat
{E}(\mathbf{k}_{2}\mathbf{)}\ +\ }\label{apend9}\\
&  \sigma^{2}\left(  \mathbf{\widehat{\mathbf{1}}\ \otimes\widehat{a}%
+\widehat{a}\ \otimes\widehat{\mathbf{1}}}\right)
\end{align}

and \ with the notation \
\begin{equation}
y=\left(  \varepsilon_{\mathbf{k}_{1}}^{2}+\varepsilon_{\mathbf{k}_{2}}%
^{2}\right)  -\omega^{2} \label{apend10}%
\end{equation}

the eigenvalue equation(\ref{apend5}) became%
\begin{equation}
\widehat{C}\mathbf{v}=y\mathbf{v} \label{apend11}%
\end{equation}

with%
\begin{equation}
\widehat{C}=2\widehat{E}(\mathbf{k}_{1}\mathbf{)\otimes\widehat{E}%
(\mathbf{k}_{2}\mathbf{)}\ +}\sigma^{2}\left(  \mathbf{\widehat{\mathbf{1}%
}\ \otimes\widehat{a}+\widehat{a}\ \otimes\widehat{\mathbf{1}}}\right)
\label{apend12}%
\end{equation}

It follows
\begin{equation}
\widehat{C}^{2}\mathbf{v}=y^{2}\mathbf{v} \label{apend13}%
\end{equation}

and from Eq.(\ref{apend12}, \ref{apend6}-\ref{apend8},) we obtain%
\begin{equation}
\widehat{C}^{2}=4\varepsilon_{\mathbf{k}_{1}}^{2}\varepsilon_{\mathbf{k}_{2}%
}^{2}+2\sigma^{2}\left[  \widehat{E}(\mathbf{k}_{1}\mathbf{)\otimes
\widehat{\mathbf{1}}\ +\widehat{\mathbf{1}}\ \otimes}\widehat{E}%
(\mathbf{k}_{2}\mathbf{)}+\sigma^{2}\mathbf{\widehat{a}\otimes}\widehat
{\mathbf{a}}\right]  \mathbf{\ } \label{apend14}%
\end{equation}

or compared with Eq.(\ref{apend4})
\begin{equation}
\widehat{C}^{2}=4\varepsilon_{\mathbf{k}_{1}}^{2}\varepsilon_{\mathbf{k}_{2}%
}^{2}+2\sigma^{2}\widehat{B} \label{apend15}%
\end{equation}

So, from Eqs. (\ref{apend13}, \ref{apend15}) we obtain
\begin{equation}
\left[  4\varepsilon_{\mathbf{k}_{1}}^{2}\varepsilon_{\mathbf{k}_{2}}%
^{2}+2\sigma^{2}\widehat{B}\right]  \mathbf{v}=y^{2}\mathbf{v} \label{apend16}%
\end{equation}

or, combined with Eqs.(\ref{append3}, \ref{apend10}) we obtain the relation%
\begin{equation}
4\varepsilon_{\mathbf{k}_{1}}^{2}\varepsilon_{\mathbf{k}_{2}}^{2}%
+2i\omega\sigma^{2}=\left(  \left(  \varepsilon_{\mathbf{k}_{1}}%
^{2}+\varepsilon_{\mathbf{k}_{2}}^{2}\right)  -\omega^{2}\right)
^{2}\label{apend17}%
\end{equation}

\end{document}